\spnewtheorem{conjecturee}[conjecture]{Conjecture}{\bfseries}{\itshape}
\begin{document}
\markboth{\LaTeXe{} Class for Lecture Notes in Computer
Science}{\LaTeXe{} Class for Lecture Notes in Computer Science}
\thispagestyle{empty}


\pagestyle{plain}

\title{
The Complexity of HCP in Digraps with Degree
Bound Two}
\author{Guohun Zhu}
\institute{  Guilin University of Electronic Technology,\newline
            No.1 Jinji Road,Guilin, Guangxi, 541004,P.R.China \newline
            \email{ccghzhu@guet.edu.cn}}

\maketitle

\begin{abstract}
The Hamiltonian cycle problem (HCP) in digraphs $D$ with degree
bound two is solved by two mappings in this paper. The first
bijection is between an incidence matrix $C_{nm}$ of simple digraph
and an incidence matrix $F$ of balanced bipartite undirected graph
$G$; The second mapping is from  a perfect matching of $G$ to a
cycle of $D$. It proves that the complexity of HCP in $D$ is
polynomial, and finding a second non-isomorphism Hamiltonian cycle
from a given Hamiltonian digraph with degree bound two is also
polynomial. Lastly it deduces $P=NP$ base on the results.
\\
\end{abstract}

\section{Introduction}

It is well known that the Hamiltonian cycle problem(HCP) is one of
the standard NP-complete problem \cite{Johnson1985}. As for
digraphs, even when the digraphs on this case: planar digraphs with
indegree 1 or 2 and outdegree 2 or 1 respectively, it is still
$NP-Complete$ which is proved by J.Plesn{\'\i}k \cite{PLESNIK1978}.


 Let us named a simple strong connected digraphs
with at most indegree 1 or 2 and outdegree 2 or 1 as $\Gamma$
digraphs. This paper solves the HCP of $\Gamma$ digraphs with
following main results.

\begin{theorem}
\label{bipariteofgamma} Given an  incidence matrix $C_{nm}$ of
$\Gamma$ digraph, building a mapping:$F=\left (
    {\begin{array}{c c}
    C^+  \\
    -C^-
    \end{array}}\right)$, then
$F$ is a  incidence
 matrix  of undirected
balanced bipartite graph $G(X,Y;E)$, which obeys the following
properties:
\begin{enumerate}
\item[c1.]
 $|X|=n$,$|Y|=n$,$|E|=m$
\item[c2.]
    $$ \forall x_i \in X \wedge 1 \leq  d(x_i) \leq 2$$
    $$ \forall y_i \in Y \wedge  1 \leq  d(y_i) \leq 2$$
\item[c3.]
   $G$ has at most $\frac{n}{4}$ components which is length of $4$.
\end{enumerate}
\end{theorem}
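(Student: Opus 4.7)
The plan is to read $F$ as the incidence matrix of the ``bipartite double'' of $D$: introduce for every $v_i$ a tail copy $x_i\in X$ and a head copy $y_i\in Y$, and for every arc $(v_i,v_j)$ of $D$ introduce one undirected edge $\{x_i,y_j\}$ in $G$. The block $C^+$ records column by column in which row each arc has its tail, while $-C^-$ (read with its sign flipped so that its nonzero entries become $+1$) records the head; hence each column of $F$ has exactly two nonzero entries, one in the top block and one in the bottom, and $F$ is indeed the incidence matrix of a bipartite graph $G(X,Y;E)$.

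Property (c1) is then immediate from the construction: the two row-blocks give $|X|=|Y|=n$ and the columns are in bijection with the arcs of $D$, so $|E|=m$. For (c2) I would observe that $d_G(x_i)$ equals the out-degree of $v_i$ and $d_G(y_i)$ equals the in-degree of $v_i$; strong connectivity forbids the value $0$ and the $\Gamma$ constraint forbids anything above $2$, so $1\le d(x_i), d(y_i)\le 2$.

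The only non-routine part is (c3). Since $\Delta(G)\le 2$, every component of $G$ is a path or a cycle. If $K$ is a $4$-cycle component $x_a\,{-}\,y_b\,{-}\,x_c\,{-}\,y_d\,{-}\,x_a$, then $x_a,x_c$ have degree $2$ in $G$, so $v_a,v_c$ have out-degree $2$ in $D$ and, by the $(1,2)/(2,1)$ dichotomy of $\Gamma$ vertices, in-degree $1$; symmetrically $v_b,v_d$ have in-degree $2$ and out-degree $1$. The looplessness of $D$ forces the indices $a,b,c,d$ to be pairwise distinct, so $K$ consumes four distinct vertices of $D$. The key step is then to note that each digraph vertex can lie in at most one cycle component of $G$: if $v_i$ has type $(1,2)$, only its tail copy $x_i$ has degree $2$ in $G$, while $y_i$ has degree $1$ and therefore lies on a path, and the symmetric statement applies to type-$(2,1)$ vertices. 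Consequently the underlying digraph-vertex sets of distinct $4$-cycle components are pairwise disjoint, and a simple vertex count gives at most $\lfloor n/4\rfloor$ such components. I expect this disjointness argument, which rests on the $\Gamma$ dichotomy, to be the step that requires the most care; the rest is bookkeeping on the bipartite-splitting construction.
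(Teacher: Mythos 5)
Your treatment of (c1) and (c2) matches the paper's: both read the two row-blocks of $F$ as tail and head copies of $V$, obtain $|X|=|Y|=n$ and $|E|=m$ from the block structure together with strong connectivity, and identify $d_G(x_i)=d^+(v_i)$ and $d_G(y_i)=d^-(v_i)$. For (c3), however, you take a genuinely different route. The paper argues by contradiction through the rank formula $r(C)=n-k$: assuming more than $\frac{n}{4}$ length-$4$ components, it bounds the total number $q$ of components by $\frac{n}{2}$ via $r(F)\geq r(C^+)=n$, and then extracts an arithmetic contradiction ($t<2$) from a vertex count $4k+xt=\frac{3n}{2}$ over the remaining components. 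You instead count directly in the digraph: a $4$-cycle of $G$ uses four pairwise distinct vertices of $D$ (by looplessness and the bipartition), and the $(1,2)/(2,1)$ dichotomy of $\Gamma$ vertices guarantees that for each $v_i$ at most one of $x_i,y_i$ has degree $2$, so distinct cycle components project to disjoint vertex sets of $D$, giving $4k\leq n$. Your argument is more elementary and arguably more watertight: the paper's count is carried out over $\frac{3n}{2}$ vertices although $G$ has $2n$ of them, and its final step (that $t<2$ contradicts strong connectivity) is left implicit, whereas your disjointness observation does the same job with no rank computation at all. Two small points to make explicit: the dichotomy you invoke must exclude $(2,2)$ vertices, which the definition of $\Gamma$ does; and vertices of type $(1,1)$, if permitted, are harmless since neither of their copies can then lie on a cycle.
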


Let us named the undirected balanced bipartite graph $G(X,Y:E)$ of
$\Gamma$ digraph as projector graph.

\begin{theorem}
\label{perfectofgamma} Let $G$ be the projector graph of a $\Gamma$
graph $D(V,A)$, determining a Hamiltonian cycle in $\Gamma$ digraph
is equivalent to find a perfect match $M$ in $G$ and
$r(C^\prime)=n-1$, where $C^\prime$ is the incidence matrix of
$D^\prime(V,L) \subseteq D$ and $L=\{a_i| a_i \in D \wedge e_i \in
M\}$.
\end{theorem}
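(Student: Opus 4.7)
The plan is to prove the equivalence in both directions, using the bijective correspondence between arcs of $D$ and edges of $G$ supplied by Theorem~\ref{bipariteofgamma}. Recall that under that construction, each arc $a_k = (v_i, v_j)$ of $D$ maps to an edge $e_k = \{x_i, y_j\}$ of $G$, with $X$ recording tails and $Y$ recording heads. So a set $L$ of arcs in $D$ and the corresponding set $M$ of edges in $G$ are in canonical bijection, and what must be shown is that the combinatorial constraints ``Hamiltonian cycle in $D$'' and ``perfect matching in $G$ plus $r(C') = n-1$'' select the same families of such sets.

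For the forward direction, I would take a Hamiltonian cycle $H \subseteq D$, let $L = A(H)$, and let $M$ be the corresponding edge set in $G$. Since $H$ visits each vertex exactly once with one incoming and one outgoing arc, each $x_i \in X$ is incident to exactly one edge of $M$ (the unique out-arc of $v_i$) and each $y_j \in Y$ is incident to exactly one (the unique in-arc of $v_j$); thus $M$ is a perfect matching of $G$. Next I would compute the rank of $C'$. For a digraph whose underlying structure is a disjoint union of $k$ directed cycles covering all $n$ vertices, the incidence matrix has rank exactly $n-k$, since each cycle contributes one linear dependency (sum of its columns is zero) and the columns are otherwise independent. A single Hamiltonian cycle gives $k=1$, so $r(C') = n-1$.

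For the reverse direction, I would start from a perfect matching $M$ in $G$ and the induced arc set $L$. The perfect matching condition translates, via the same bijection, to the requirement that every vertex of $D'$ has in-degree exactly one and out-degree exactly one. A digraph with this property is a vertex-disjoint union of directed cycles on $V$; call the number of cycles $k$. Applying the same rank formula in reverse, $r(C') = n - k$. The hypothesis $r(C') = n-1$ forces $k=1$, so $D'$ is a single directed cycle on all $n$ vertices, i.e.\ a Hamiltonian cycle of $D$.

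The only nontrivial ingredient is the rank identity $r(C') = n - k$ for a digraph that is a disjoint union of $k$ directed cycles, which is where I expect to spend the most care. I would justify it by splitting $C'$ into block-diagonal form according to the cycle components, noting that each block is the incidence matrix of a directed cycle on $n_i$ vertices (rank $n_i - 1$ because the columns sum to zero but any proper subset is independent), and summing: $r(C') = \sum_i (n_i - 1) = n - k$. Once this lemma is in hand, the equivalence in the theorem follows by combining it with the perfect-matching correspondence outlined above.
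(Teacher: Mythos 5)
Your proposal is correct and follows essentially the same route as the paper: both directions rest on the arc--edge bijection, the observation that a perfect matching corresponds exactly to every vertex of $D'$ having in-degree and out-degree one, and the rank formula $r(C')=n-k$ for the number of components. The only difference is one of ordering in the reverse direction --- the paper deduces connectivity from the rank first and the degree conditions from the matching second, whereas you derive the degree conditions first (so that $D'$ is a disjoint union of directed cycles, where weak and strong connectivity coincide) and then let the rank force a single cycle; your ordering is slightly cleaner since the rank theorem by itself only controls weak components.
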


Let the each component of $G$ corresponding to a boolean variable, a
 monotonic function $f(M)$ is build to represents the number of
component in $D$. Based on this function, the maximum number of non-isomorphism perfect matching is linear, thus  complexity of $\Gamma$
digraphs has a answer.
\begin{theorem}
\label{complexitygamma} Given the incidence matrix $C_{nm}$ of a
$\Gamma$ digraph , the complexity of finding a Hamiltonian cycle
existing or not is $O(n^4)$
\end{theorem}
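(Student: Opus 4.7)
The plan is to assemble the three preceding theorems into an algorithm whose running time is $O(n^4)$. First I would construct the projector graph $G(X,Y;E)$ from the given incidence matrix $C_{nm}$ using the block construction of Theorem \ref{bipariteofgamma}. Since $m\le 2n$ for a $\Gamma$ digraph, this preprocessing costs $O(n^2)$.

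Second, by Theorem \ref{perfectofgamma}, the existence of a Hamiltonian cycle in $D$ is equivalent to the existence of a perfect matching $M$ in $G$ whose induced subdigraph $D'(V,L)$ has incidence matrix of rank $n-1$. A perfect matching in the balanced bipartite graph $G$ can be produced with the Hopcroft--Karp algorithm in $O(m\sqrt{n})$ time, and the rank of the $n\times n$ matrix $C'$ can then be verified in $O(n^3)$ time by Gaussian elimination; if both conditions hold, the corresponding Hamiltonian cycle is read off directly from $L$.

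The nontrivial situation is the one in which $G$ admits a perfect matching but the first matching found yields $r(C')<n-1$, i.e.\ a union of several disjoint cycles in $D$. Here I would invoke the remark preceding the theorem: each component of $G$ corresponds to a boolean variable, and the monotonic function $f(M)$ counts the cycles produced in $D$. Property c3 of Theorem \ref{bipariteofgamma} bounds the number of length-$4$ components of $G$ by $n/4$, and each such $C_4$ carries exactly two perfect matchings, so the family of non-isomorphic perfect matchings of $G$ is parameterised by at most $n/4$ boolean variables. Monotonicity of $f$ then ensures that any matching yielding too many cycles can be strictly improved by toggling the matching on a single length-$4$ component, so only $O(n)$ such local modifications need be inspected before either a Hamiltonian cycle appears or infeasibility is certified.

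Putting these pieces together, the total running time is bounded by $O(n)$ iterations of matching construction and rank checking, each costing at most $O(n^3)$, which yields the advertised $O(n^4)$ bound. The main obstacle that I would have to nail down with care is precisely the claim that the number of non-isomorphic perfect matchings requiring examination is \emph{linear} rather than exponential: this hinges on the monotonicity of $f(M)$ together with the fact that one can always identify a length-$4$ component whose toggling strictly decreases the cycle count in $D$, so that the search proceeds as a sequence of local improvements instead of a blind enumeration over all $2^{n/4}$ matchings.
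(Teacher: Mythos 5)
Your overall architecture matches the paper's: build the projector graph, reduce the problem to perfect matchings via Theorem~\ref{perfectofgamma}, and argue that only linearly many matchings need to be tested, each test costing $O(n^3)$ for the rank computation. The one place you diverge is in how the linear bound on the search space is justified: the paper appeals to Lemma~\ref{number_of_matching} (at most $n/2$ ``unlabeled'' perfect matchings, obtained by declaring two matchings isomorphic whenever their codes have the same number of $1$ entries) together with Proposition~\ref{myprop}, whereas you appeal to a greedy local-improvement argument driven by the asserted monotonicity of $f(M)$.

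Both justifications have the same genuine gap, and you have correctly flagged where it sits. With $k$ components of length at least $4$ there are $2^{k}$ perfect matchings, indexed by the vertices of a $k$-dimensional hypercube. Your claim that any matching yielding too many cycles ``can be strictly improved by toggling the matching on a single length-$4$ component'' is exactly what would be needed, but it is neither proved nor true in general: toggling one component can merge two cycles of $D^\prime$ into one, split one cycle into two, or leave the count unchanged, and nothing guarantees that a configuration consisting of several disjoint cycles admits a single toggle that strictly decreases the count. A local search accepting only improving toggles can stall at a local optimum while reaching the Hamiltonian cycle requires flipping several components simultaneously; the claimed monotonicity of $f$ on this poset is never established. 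The paper's alternative route fares no better: Proposition~\ref{myprop} asserts only that \emph{equal} codes give equal rank, not that codes with the same Hamming weight do, so collapsing $\{0,0,1\}$ and $\{0,1,0\}$ into one isomorphism class is unjustified --- the components occupy different positions in $D$, and toggling different ones produces genuinely different subdigraphs with potentially different ranks. Without a sound argument that the number of matchings requiring examination is polynomial, the $O(n^4)$ bound does not follow; and since Plesn\'{\i}k's reduction shows HCP on these digraphs is NP-complete, any repair of this step would have to do something far more substantial than a local-improvement or code-counting argument.
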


The concepts of cycle and rank of graph are given in section $2$.
Then theorems $1$,$2$,$3$ are proved in sections $3$,$4$,$5$
respectively. The last section discusses the $P$ versus $NP$ in more
detail.

\section{Definition and properties}

Throughout this paper we consider the finite simple (un)directed graph
$D=(V,A)$ ($G(V,E)$, respectively), i.e. the graph has no
multi-arcs and no self loops. Let $n$ and $m$ denote the number of
vertices $V$ and arcs $A$ (edges $E$, respectively), respectively.

 As conventional, let $|S|$ denote the number of a set $S$.
 The set of vertices $V$ and set of arcs of $A$ of a digraph $D(V,A)$ are denoted by
 $V=\{v_i | 1 \leq i \leq n\}$ and $A=\{a_j | (1 \leq j \leq m) \wedge a_j=<v_i,v_k>, (v_i \neq v_k \in V) \}$
 respectively,
  where $ <v_i,v_k>$ is a arc from $v_i$ to $v_k$.
  Let the out degree of vertex $v_i$ denoted by $d^{+}(v_i)$,
which has the in degree by denoted as $d^{-}(v_i)$ and has the
degree $d(v_i)$ which equals $d^{+}(v_i)+d^{-}(v_i)$. Let the
$N^+(v_i)=\{v_j| <v_i,v_j> \in A \}$, and $N^-(v_i)=\{v_j |
<v_j,v_i> \in A \}$.

Let us define a forward relation $ \bowtie $ between two arcs as
following, $ a_i \bowtie a_j = v_k \: \mbox{iff}\: a_i=<v_i,v_k>
\wedge a_j=<v_k, v_j> $. It is obvious that $|a_i \bowtie a_i|=0$ .

A {\it cycle } $L$  is a set of arcs $(a_{1},a_{2},\ldots,a_{l})$ in
a digraph $D$, which obeys two conditions:
\begin{enumerate}
\item[c1.] $ \forall a_i \in L,\exists a_j,a_k \in L \setminus \{a_i\},\;  a_i \bowtie a_j \neq a_j \bowtie a_k \in V $
\item[c2.] $ | \bigcup\limits_{a_i \neq a_j \in L } {a_i \bowtie a_j} |=|L|$
\end{enumerate}
   If a cycle $L$ obeys the following conditions, it is a {\it simple cycle}.
\begin{enumerate}
\item[c3.] $\forall L' \subset L $, $L'$ does not satisfy both conditions $c1$  and  $c2$.
\end{enumerate}

 A {\it Hamiltonian cycle $L$} is also a simple cycle of length $n=|V| \geq
 2$ in digraph. As for simplify, this paper given a sufficient condition of Hamiltonian cycle
 in digraph.

\begin{lemma}
\label{HCPdef} If a digraph $D(V,A)$ include a sub graph
$D^\prime(V,L)$ with following two properties, the $D$ is a
Hamiltonian graph.
\begin{enumerate}
\item[c1.] $ \forall v_i \in D^\prime \rightarrow d^+(v_i)=1 \wedge d^-(v_i)=1$,
\item[c2.] $ |L|=|V| \geq 2$ and $D^\prime$ is a strong connected digraph.
\end{enumerate}
\end{lemma}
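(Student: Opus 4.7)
My plan is to show that the two hypotheses force $D'$ itself to be a single directed cycle on all $n$ vertices, which is then the required Hamiltonian cycle of $D$.

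First I would analyze condition c1. Since every vertex $v_i$ of $D'$ has $d^+(v_i)=d^-(v_i)=1$, the arc set $L$ defines a total function $\sigma:V\to V$ with $\sigma(v)$ equal to the unique out-neighbor of $v$; the in-degree condition forces $\sigma$ to also be injective, hence a permutation of $V$. Starting from any vertex and iterating $\sigma$ must therefore return to the starting vertex, and the orbit traced out satisfies conditions c1 and c2 of the cycle definition with respect to the $\bowtie$ relation (each arc in the orbit meets the next at exactly one vertex, and all intersection vertices are distinct). Thus $L$ partitions into vertex-disjoint simple directed cycles $L_1,\dots,L_k$ whose vertex sets partition $V$.

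Next I would use condition c2. If $k\geq 2$, pick vertices $u\in V(L_1)$ and $w\in V(L_2)$. Any directed path in $D'$ leaving $u$ stays inside $L_1$, because at every vertex the unique out-arc belongs to the cycle containing that vertex; hence there is no directed $u\to w$ path in $D'$, contradicting strong connectivity. So $k=1$, meaning $L$ is a single simple directed cycle visiting every vertex of $V$.

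Finally, since $|L|=|V|=n\geq 2$ and this single cycle is contained in $D'\subseteq D$, it is a simple cycle in $D$ of length $n$, which is precisely a Hamiltonian cycle of $D$ by the definition given just before the lemma. The main obstacle I anticipate is not the combinatorial argument itself (which is standard) but rather the bookkeeping of verifying that a $\sigma$-orbit satisfies the paper's slightly non-standard cycle conditions c1 and c2 phrased in terms of $\bowtie$; once that translation is in place, the decomposition into disjoint cycles and the strong-connectivity reduction to $k=1$ are routine.
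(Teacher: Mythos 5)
Your argument is correct, but there is nothing in the paper to compare it against: the paper states Lemma~\ref{HCPdef} with no proof at all, introducing it only as ``a sufficient condition of Hamiltonian cycle in digraph'' and then invoking it later (in the proof of Theorem~\ref{perfectofgamma}). Your proposal therefore supplies a proof where the paper supplies none. The route you take is the standard one: condition c1 makes the arc set $L$ the graph of a permutation $\sigma$ of $V$, so $L$ decomposes into vertex-disjoint directed cycles (each of length at least $2$, since the paper excludes self-loops); strong connectivity in c2 then rules out more than one cycle, because a directed walk in $D'$ can never leave the cycle it starts on when every vertex has a unique out-arc. One small remark: since c2 already grants strong connectivity, you could shortcut the orbit decomposition entirely --- follow the unique out-arcs from any vertex until a vertex repeats; the repeated vertex must be the start (in-degree $1$), giving a cycle $Z$, and no arc of $L$ leaves $V(Z)$, so strong connectivity forces $V(Z)=V$. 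Your care in checking that a $\sigma$-orbit satisfies the paper's $\bowtie$-based cycle conditions c1 and c2 is well placed, as that definition is nonstandard; also note that the hypothesis $|L|=|V|$ in c2 is redundant once c1 holds, since $|L|=\sum_{v}d^+(v)=|V|$. In short, the proposal is a complete and correct proof of a statement the paper leaves unproved.
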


 A graph that has at least one Hamiltonian cycle is
called a {\it Hamiltonian graph}. A graph G=$(V;E)$ is bipartite if
the vertex set $V$ can be partitioned into two sets $X$ and $Y$ (the
bipartition) such that $\exists e_i \in E, x_j \in X, \forall x_k
\in X \setminus \{x_j\}$, $(e_i \bowtie x_j \neq \emptyset
\rightarrow e_i \bowtie x_k =\emptyset)$ ($e_i, Y$, respectively).
if $|X|=|Y|$, We call that $G$ is a balanced bipartite graph. A
matching $M \subseteq E$ is a collection of edges such that every
vertex of $V$ is incident to at most one edge of $M$, a matching of
balanced bipartite graph is perfect if $|M| = |X|$. Hopcroft and
Karp shows that constructs a perfect matching of bipartite in $O((m
+ n)\sqrt{n})$ \cite{Hopcroft1973}. The matching of bipartite has a
relation with neighborhood of $X$.
\begin{theorem}
\cite{Hall1935}
\label{HallTheorem}
A bipartite graph $G=(X,Y;E)$ has a matching from $X$ into $Y$ if and only if $|N(S)| \geq S$, for any $S \subseteq X$.
\end{theorem}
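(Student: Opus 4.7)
The plan is to prove the theorem in two directions. The forward (necessity) direction is straightforward: if $M$ is a matching from $X$ into $Y$, then for any $S \subseteq X$ the vertices in $Y$ that $M$ assigns to members of $S$ form a set of $|S|$ distinct elements, all of which lie in $N(S)$. Hence $|N(S)| \geq |S|$. This part requires only a short counting argument.

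For the reverse (sufficiency) direction I would proceed by induction on $|X|$. The base case $|X|=1$ is immediate: the Hall condition forces $|N(\{x\})| \geq 1$, so $x$ has at least one neighbor and one matches it to any such vertex. For the inductive step I would split into two cases according to how tightly the Hall condition is satisfied.

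In the first (slack) case, suppose every proper nonempty $S \subsetneq X$ satisfies the strict inequality $|N(S)| \geq |S|+1$. Then pick an arbitrary $x \in X$ and any $y \in N(x)$, match them, and verify that the reduced bipartite graph on $X \setminus \{x\}$ and $Y \setminus \{y\}$ still satisfies the Hall condition: removing $y$ can decrease the neighborhood size of any subset of $X \setminus \{x\}$ by at most one, and the strict slack absorbs this loss. The inductive hypothesis then yields a matching of $X \setminus \{x\}$ into $Y \setminus \{y\}$, which combined with the edge $(x,y)$ gives the required matching.

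In the second (tight) case some proper nonempty $S \subsetneq X$ satisfies $|N(S)| = |S|$. I would apply the inductive hypothesis to the bipartite subgraph induced on $(S, N(S))$ to obtain a matching of $S$ into $N(S)$, and then apply it again to the bipartite subgraph induced on $(X \setminus S, Y \setminus N(S))$. The main obstacle, and the step requiring the most care, is verifying that this second subgraph satisfies the Hall condition: for any $T \subseteq X \setminus S$, one must combine $|N(S \cup T)| \geq |S \cup T|$ with $|N(S)|=|S|$ to conclude that the neighborhood of $T$ in $Y \setminus N(S)$ has size at least $|T|$. Gluing together the two matchings produced by induction then yields the desired matching of $X$ into $Y$.
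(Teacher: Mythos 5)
The paper does not prove this statement at all: it is imported verbatim as Hall's theorem with a citation to Hall (1935), so there is no in-paper argument to compare against. Your proposal is a correct, self-contained proof --- it is the classical Halmos--Vaughan induction on $|X|$. The necessity direction is fine, and both branches of the inductive step check out: in the slack case, every nonempty $T \subseteq X \setminus \{x\}$ is a proper nonempty subset of $X$, so the hypothesis $|N(T)| \geq |T|+1$ indeed absorbs the loss of the single vertex $y$; in the tight case, the key inequality $|N_{Y \setminus N(S)}(T)| = |N(S \cup T)| - |N(S)| \geq |S \cup T| - |S| = |T|$ for $T \subseteq X \setminus S$ is exactly the right computation, using $N(S \cup T) = N(S) \cup N(T)$ and the tightness $|N(S)| = |S|$. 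One cosmetic remark: the statement in the paper writes $|N(S)| \geq S$ where it plainly means $|N(S)| \geq |S|$, and you have correctly read it that way; also note that ``a matching from $X$ into $Y$'' must be read as a matching saturating $X$, which is the version your induction delivers.
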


\begin{lemma}
\label{simplecycle} A even length of simple cycle consist of two
disjoin perfect matching.
\end{lemma}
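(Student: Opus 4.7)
The plan is to label the vertices of a simple cycle of even length $2k$ consecutively as $v_1, v_2, \ldots, v_{2k}$ by using the forward relation $\bowtie$ defined earlier to traverse the cycle, and then exhibit two edge sets by alternating around the cycle. Explicitly, I would set
\[
M_1 = \{v_1v_2,\; v_3v_4,\; \ldots,\; v_{2k-1}v_{2k}\}, \qquad M_2 = \{v_2v_3,\; v_4v_5,\; \ldots,\; v_{2k}v_1\},
\]
and show that each $M_i$ is a perfect matching on the vertex set of the cycle, that $M_1 \cap M_2 = \emptyset$, and that $M_1 \cup M_2$ equals the full edge set of the cycle.

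The matching properties then follow from a direct count. Condition c2 of the cycle definition guarantees that the $2k$ vertices visited by consecutive arcs are pairwise distinct, so each $v_i$ is incident to exactly two consecutive edges of the cycle, one of which belongs to $M_1$ and the other to $M_2$. Consequently every vertex is covered exactly once in each $M_i$, so $|M_i| = k$ and $M_i$ is a perfect matching; disjointness is immediate from the parity of the indices in the definitions.

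The only place where evenness is actually used is at the closing edge: if the length were odd, the last edge $v_{2k}v_1$ (or rather $v_{2k+1}v_1$) would be forced to share a vertex with an edge already placed in its own class, so no such alternating partition would exist. The main—indeed the only non-routine—obstacle is to argue that the simple-cycle conditions c1--c3 given earlier genuinely produce a well-defined cyclic ordering of vertices so that the labeling $v_1, \ldots, v_{2k}$ is unambiguous; once that is in hand, the decomposition and the matching verification are a straightforward counting exercise.
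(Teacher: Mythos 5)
The paper states Lemma~\ref{simplecycle} with no proof at all --- it is simply asserted and then invoked later in the counting of perfect matchings --- so there is no argument of the paper's to compare yours against. Your alternating decomposition $M_1=\{v_1v_2,v_3v_4,\dots,v_{2k-1}v_{2k}\}$, $M_2=\{v_2v_3,\dots,v_{2k}v_1\}$ is the standard and correct proof of this fact, and it genuinely supplies something the paper omits. Two small remarks. First, the one step you flag as non-routine --- that a simple cycle admits an unambiguous cyclic labelling of its vertices --- is immediate for any reasonable reading of the definition (every vertex of a simple cycle is incident to exactly two of its edges, so following the unused edge at each step produces the ordering), so you should not leave it as an unresolved obstacle; a one-line justification closes it. Second, the lemma is applied in the paper to \emph{undirected} cycles arising as components of the projector graph $G$ (the $C_4$, $C_6$, \dots\ components), whereas the relation $\bowtie$ is defined only for arcs of a digraph; your traversal argument should be phrased for undirected cycles (or the incidence relation of $G$) rather than via $\bowtie$. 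Neither point affects the correctness of the decomposition itself.
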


Two matrices representation for graphs are defined as follows.

\begin{definition}
\cite{Pearl1973} The incidence matrix $C$  of undirected graph $G$
is a two dimensional $n \times m$ table, each row represents one
vertex, each column represents one edge, the $c_{ij}$  in $C$   are
given by
\begin{equation}
c_{ij} = \left\{\begin{array}{ll}
                 1, & \mbox{if $v_i \in e_j$;} \\
                 0,      & \mbox{otherwise.}
                \end{array} \right.
\end{equation}
\end{definition}

It is obvious that every column of an incidence matrix has exactly
two $1$ entries.

\begin{definition}
\label{incidencematrixdef}
\cite{Pearl1973}
 The incidence
matrix $C$  of directed graph $D$ is a two dimensional $n \times m$
table, each row represents one vertex, each column represents one
arc the $c_{ij}$  in $C$   are given by
\begin{equation}
\label{incidencedef}
c_{ij} = \left\{\begin{array}{ll}
                 1, & \mbox{if $<v_i,v_i> \bowtie a_j =v_i $;} \\
                 -1, & \mbox{if $a_j \bowtie <v_i,v_i>=v_i $;} \\
                 0, & \mbox{$otherwise $}.
                \end{array} \right.
\end{equation}
\end{definition}

It is obvious to obtain a corollary of  the incidence matrix as following.

\begin{corollary}
\label{incidencecorollary} Each column of an incidence matrix of
digraph has exactly one $1$ and one $-1$ entries.
\end{corollary}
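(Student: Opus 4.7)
The plan is to unpack Definition \ref{incidencematrixdef} arc by arc and simply count nonzero rows. Fix an arbitrary column index $j$; it corresponds to an arc $a_j = \langle v_s, v_t \rangle$ in $D$. Because we work only with simple digraphs (no self-loops), we automatically have $v_s \neq v_t$, which will be the one hypothesis actually used.

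Next I would translate the two defining clauses into plain language using the $\bowtie$ relation. The identity $\langle v_i, v_i \rangle \bowtie a_j = v_i$ holds precisely when $a_j = \langle v_i, \cdot \rangle$, i.e.\ when $v_i$ is the tail of $a_j$; similarly $a_j \bowtie \langle v_i, v_i \rangle = v_i$ holds precisely when $v_i$ is the head of $a_j$. Applying this vertex by vertex to column $j$ gives $c_{sj} = 1$, $c_{tj} = -1$, and $c_{ij} = 0$ for every other row $i$, since such a $v_i$ is neither the tail nor the head of $a_j$ and thus satisfies neither of the first two clauses of the definition.

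Finally I would invoke $v_s \neq v_t$ to conclude that the $+1$ and the $-1$ occupy two distinct rows, so the column has exactly one $+1$, exactly one $-1$, and zeros elsewhere, proving the corollary.

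There is really no obstacle here: the corollary is a direct restatement of Definition \ref{incidencematrixdef} modulo the observation that simple digraphs have distinct tails and heads on every arc. The only place one could slip is in conflating the two definitional clauses for a self-loop $v_s = v_t$, which would put a $+1$ and a $-1$ in the same row; the no-self-loops convention stated at the start of Section~2 rules this out.
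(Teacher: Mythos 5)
Your proof is correct: the paper itself offers no argument for this corollary (it is simply declared ``obvious'' after Definition~\ref{incidencematrixdef}), and your direct unpacking of the two $\bowtie$ clauses---tail gets $+1$, head gets $-1$, all other rows $0$, with the no-self-loop convention guaranteeing the two nonzero entries lie in distinct rows---is exactly the argument the paper implicitly relies on. Nothing further is needed.
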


\begin{theorem}
\label{rank_theorem}
\cite{Pearl1973} The $C$ is the incidence
matrix of a directed graph with $k$ components the rank of $C$ is
given by
\begin{equation}
\label{rankcomponent}
    r(C)=n-k
\end{equation}
\end{theorem}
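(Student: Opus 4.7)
The plan is to prove the rank formula $r(C)=n-k$ by establishing matching upper and lower bounds. The upper bound $r(C)\le n-k$ comes directly from Corollary~\ref{incidencecorollary}: since every column of $C$ has exactly one $+1$ and one $-1$, and these two nonzero entries must lie in the same (weakly) connected component, the sum of the rows indexed by the vertex set $V_i$ of any component $D_i$ is the zero vector. Letting $D_1,\dots,D_k$ be the components, this yields $k$ linear dependencies among the rows; because these dependencies involve pairwise disjoint sets of row indices, they are themselves linearly independent, giving $r(C)\le n-k$.

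For the matching lower bound $r(C)\ge n-k$, my plan is to reduce to a single component and exhibit an explicit nonsingular $(n_i-1)\times(n_i-1)$ submatrix for each component $D_i$, where $n_i=|V_i|$. Concretely, I would pick an (undirected) spanning tree $T_i$ of each $D_i$, choose an arbitrary root $r_i\in V_i$, and traverse $T_i$ in post-order so that every non-root vertex $v$ is visited before its parent $p(v)$. Each non-root $v$ is incident to a unique parent edge $e_v\in T_i$. I then form the submatrix $C_i'$ of $C$ with rows indexed by $V_i\setminus\{r_i\}$ and columns indexed by $\{e_v:v\in V_i\setminus\{r_i\}\}$, with both indexed in post-order.

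The key observation is that the column of $e_v$ restricted to $V_i\setminus\{r_i\}$ can have nonzero entries only at rows $v$ and $p(v)$ (when $p(v)\ne r_i$), and $p(v)$ appears strictly later than $v$ in the post-order. Hence $C_i'$ is lower triangular, and its diagonal entries are the $\pm 1$ contributions of $e_v$ at row $v$; so $\det(C_i')=\pm 1$ and $C_i'$ is nonsingular. Taking the block-diagonal union of these $k$ submatrices produces an $(n-k)\times(n-k)$ nonsingular submatrix of $C$, proving $r(C)\ge n-k$. Combined with the upper bound, this gives $r(C)=n-k$.

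The only genuine obstacle is verifying the triangularity claim carefully: one must check that for each $v$, the parent $p(v)$ is indeed later in the post-order, and that the entry of $e_v$ at row $v$ is nonzero regardless of the arc's orientation (which is immediate from Definition~\ref{incidencematrixdef}, since $v$ is an endpoint of $e_v$). Once this combinatorial bookkeeping is in place, the proof goes through uniformly, and the direction of the arcs enters only through the signs on the diagonal, which does not affect invertibility.
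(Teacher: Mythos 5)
The paper offers no proof of this theorem at all: it is stated as a known result imported from \cite{Pearl1973}, so there is nothing internal to compare your argument against. Your proof is the standard, correct argument and it is complete. The upper bound is right: by Corollary~\ref{incidencecorollary} each column has one $+1$ and one $-1$ lying in the same weak component, so the indicator vectors of the $k$ component vertex sets are $k$ left null vectors of $C$ with pairwise disjoint supports, hence linearly independent, giving $r(C)\leq n-k$. The lower bound via a rooted spanning tree of each component is also sound: ordering the non-root vertices in post-order and pairing each with its parent edge yields a square submatrix in which column $e_v$ is supported only on rows $v$ and $p(v)$, with $p(v)$ strictly later (or equal to the excluded root), so the submatrix is lower triangular with $\pm 1$ diagonal and determinant $\pm 1$; the block-diagonal assembly over components gives an invertible $(n-k)\times(n-k)$ minor. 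The only point worth making explicit, which you implicitly get right, is that ``components'' of a digraph here must mean weakly connected components (components of the underlying undirected graph); the orientation of arcs affects only the signs and never the invertibility. Degenerate cases (isolated vertices contributing empty blocks) also cause no trouble. Your proof stands on its own and is, if anything, more informative than the paper, which relies entirely on the citation.
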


In order to convince to describe the graph $D$ properties, in this
paper, we denotes the $r(D)=r(C)$.

\section{Divided incidence matrix and Projector incidence matrix}

Firstly, let us divided the matrix of $C$ into two groups.
\begin{equation}
\label{C_Plusedef}
      C^+=\left\{c_{ij} | c_{ij} \geq 0 \mbox{ otherwise  $0$ }\right \}
\end{equation}
\begin{equation}
\label{C_Minusdef}
      C^-=\left\{c_{ij} | c_{ij} \leq 0  \mbox{ otherwise  $0$ } \right \}
\end{equation}

It is obvious that the matrix of $C^+$ represents the forward arc of
a digraph  and  $C^-$ matrix represents the backward arc
respectively.  A corollary is deduced as following.
\begin{corollary}
\label{rank1}
  A digraph $D=(V,A)$ is strong connected if and only if the rank of divided incidence matrix
  satisfies $r(C^+)=r(C^-)=|V|$.
\end{corollary}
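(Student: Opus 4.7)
My approach is to first unpack the algebraic structure of the two matrices $C^{+}$ and $C^{-}$ using Corollary~\ref{incidencecorollary}, and then translate the rank conditions into combinatorial statements about degrees and strong connectivity.

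First, by Corollary~\ref{incidencecorollary} each column of $C$ has exactly one $+1$ and one $-1$, so after the split each column of $C^{+}$ contains a single $1$ and each column of $C^{-}$ a single $-1$. Consequently row $i$ of $C^{+}$ is the $0/1$ indicator vector of the arcs leaving $v_{i}$, and row $i$ of $C^{-}$ (up to sign) is the indicator of the arcs entering $v_{i}$. Different rows of $C^{+}$ therefore have pairwise disjoint supports, and the same holds for $C^{-}$. A collection of disjoint-support vectors is linearly independent precisely when each is nonzero, since looking at any coordinate in the support of some row forces that row's coefficient to vanish. I would conclude that $r(C^{+})=n$ iff $d^{+}(v_{i})\geq 1$ for every $i$, and $r(C^{-})=n$ iff $d^{-}(v_{i})\geq 1$ for every $i$.

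For the forward direction, strong connectivity on $|V|\geq 2$ vertices provides, for every $v_{i}$, both a directed path leaving $v_{i}$ and one arriving at $v_{i}$, which forces $d^{+}(v_{i})\geq 1$ and $d^{-}(v_{i})\geq 1$ simultaneously. Combined with the equivalences of the previous step, this yields $r(C^{+})=r(C^{-})=n$.

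The main obstacle is the converse. The rank conditions only certify that $D$ has no sources and no sinks, a property strictly weaker than strong connectivity: two vertex-disjoint directed cycles already satisfy $\delta^{+}\geq 1$ and $\delta^{-}\geq 1$ while failing to be strongly connected. To close this gap I would look for an implicit hypothesis carried over from the rest of the paper, namely restricting to $\Gamma$ digraphs, where the simultaneous degree bounds $d^{+}(v_{i})+d^{-}(v_{i})\leq 3$ together with the global assumption of strong connectivity built into the definition of $\Gamma$ may rule out the two-cycle counterexample; alternatively one could invoke Theorem~\ref{rank_theorem} to show that $r(C^{+})=r(C^{-})=n$ forces $r(C)=n-1$ and hence a single (weak) component, and then argue separately that under the $\Gamma$-structure weak and strong connectivity coincide. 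If no such restriction is intended, the corollary should be weakened to the forward implication only.
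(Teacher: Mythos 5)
Your analysis is sound, and it actually goes further than the paper does: the paper offers no proof of this corollary at all, merely asserting that it ``is deduced'' from the observation that $C^{+}$ and $C^{-}$ record the forward and backward arc incidences. Your forward direction is correct and is essentially the only argument available: by Corollary~\ref{incidencecorollary} each column of $C^{+}$ (resp.\ $C^{-}$) has a single nonzero entry, so the rows have pairwise disjoint supports, hence $r(C^{+})$ (resp.\ $r(C^{-})$) equals the number of vertices of positive out-degree (resp.\ positive in-degree); strong connectivity forces all of these degrees to be positive, giving rank $|V|$ on both sides.

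Your diagnosis of the converse is also correct, and it identifies a genuine error in the statement rather than a gap in your proof: two vertex-disjoint directed cycles satisfy $r(C^{+})=r(C^{-})=|V|$ but are not strongly connected, so the ``if'' direction is simply false for general digraphs, and no linear-algebraic refinement of your argument can recover it because the rank condition is exactly equivalent to the absence of sources and sinks. Restricting to $\Gamma$ digraphs does not rescue the biconditional in any useful way, since strong connectivity is built into the definition of $\Gamma$ and the converse becomes vacuous there. Note, however, that the paper only ever invokes this corollary in the forward direction (in the proof of Theorem~\ref{bipariteofgamma}, item c3, to obtain $r(C^{+})=n$ from strong connectivity), so the repair you propose is the right one: weaken the statement to ``strongly connected implies $r(C^{+})=r(C^{-})=|V|$'', or at most state the equivalence with the condition that every vertex has both an outgoing and an incoming arc.
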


Secondly, let us combined the the $C^+$ and $C^-$ as following
matrix.
\begin{equation}
F=\left (
    {\begin{array}{c c}
    C^+  \\
    -C^-
    \end{array}}\right)
\end{equation}

 In more additional, let $F$ represents as an incidence
matrix of undirected graph $G( X,Y;E)$. The $F$ is named as {\it
projector incidence matrix } of $C $ and $G$ is named as {\it
projector graph },  where $X$ represents the vertices $V^+$ of $D$,
$Y$ represents the vertices of $V^-$ respectively. In another words
we build a mapping $F: D \rightarrow G$ and denotes it as $G=F(D)$.
So the $F(D)$ has $2n$ vertices and $m$ edges if $D$ has $n$
vertices and $m$ arcs. We also build up a reverse mapping: $F^{-1}:
G \rightarrow D$ When $G$ is a projector graph. To simplify, we also
denotes the arcs $a_i=F^{-1}(e_i)$, $v_i^+=F^{-1}(x_i)$ and
$v_i^-=F^{-1}(y_i)$.

\subsection{Proof of Theorem~\ref{bipariteofgamma} }
Firstly, let us
prove the theorem~\ref{bipariteofgamma}.

\begin{proof}

\begin{enumerate}
\item[c1.]
 Since $\Gamma$ digraph  is strong connected, then each vertices of $\Gamma$ digraph  has at
 least one
 forward arcs, each row of $C^+$ has at least one $1$ entries, and
 the $U$ represents the $C^+$ , so  $$|U|=n$$
 the same principle of $C^-$, each row of $C^-$ has at least one $-1$ entries, and
 the $V$ represents the $C^-$ , so  $$|V|=n$$
 Since the columns of $F$ equal to the columns of $C$,
 $$|E|=m$$
\item[c2.]
    Since the degree of each $v_i$ of $\Gamma$ digraph is
    $ 1 \leq d^{+}(v_i) \leq 2$,
    $$ \forall u_i \in U \wedge 1 \leq  d(u_i) \leq 2$$
    Since the degree of each $v_i$ of $\Gamma$ digraph is
    $ 1 \leq d^{-}(v_i) \leq 2$,
    $$ \forall v_i \in V \wedge  1 \leq  d(v_i) \leq 2$$
\item[c3.]
    Let us prove by contradiction, suppose there are $k>\frac{n}{4}$
    components with length of $4$ in $G$.
    Since $D$ is strong connected, according to the
    corollary~\ref{rank1}, $r(F)=\frac{3n}{2}-q \geq r(C^+)=n$,
    where $q \geq k$ is number of components (including $k$
    components with length of $4$). Thus $q \leq \frac{n}{2}$, then
    there are only $x$ components without length $4$, where $x$ is
    \begin{equation}
    \label{x_eq}
     x=q-k<\frac{n}{4}
    \end{equation}
    Suppose the remind $x$ components with length of $t$ (at least
    $t$ vertices connected by some edges), then
    $4k+xt=\frac{3n}{2}$. So $tx=\frac{3n}{2}-4k<\frac{n}{2}$.
    According to the equation~\ref{x_eq}, the $t<2$. It is
    contradict that the $D$ is strong connected.

\end{enumerate}
\end{proof}

\subsection{The cycle in digraph corresponding matching in projector graph}
Secondly, let us given the properties after mapping Hamiltonian
cycle $L$ of $D$ into the sub graph $M$ of projector graph $G$.

\begin{lemma}
\label{projector1} If a Hamiltonian cycle $L$ of $D$ mapping into a
forest $M$ of projector graph $G$, the forest $M$ consist of $|L|$
number of trees which has only two node and one edge, and $M$ has a
unique perfect matching.
\end{lemma}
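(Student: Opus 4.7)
The plan is to translate the degree conditions of a Hamiltonian cycle through the map $F$ and observe that the image is already a $1$-regular subgraph of $G$, from which the forest shape and the uniqueness of the perfect matching fall out by inspection.

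First I would invoke Lemma~\ref{HCPdef} to say that the arcs of $L$ form a spanning sub-digraph of $D$ in which every vertex satisfies $d^+(v_i)=d^-(v_i)=1$, and in particular $|L|=n$. Next I would unpack the projector construction on this sub-digraph: by Definition~\ref{incidencematrixdef}, the column of $C$ corresponding to an arc $\langle v_i,v_k\rangle$ has a $+1$ in row $i$ and a $-1$ in row $k$, so under $F$ it becomes the single edge $\{x_i,y_k\}$ of $G$. Therefore the degree of $x_i$ in $M=F(L)$ equals the number of out-arcs of $v_i$ used by $L$, and the degree of $y_k$ equals the number of in-arcs of $v_k$ used by $L$. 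Both counts are exactly $1$.

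So $M$ is a $1$-regular bipartite graph on $|X|+|Y|=2n$ vertices with $n=|L|$ edges. Any $1$-regular graph is a disjoint union of edges, and here that union has exactly $|L|$ components, each a two-vertex tree with one edge; this is precisely the forest description claimed. For the matching statement, $M$ itself is already a perfect matching (every vertex has degree $1$), and any perfect matching of $M$ is forced to pick the unique edge incident to each vertex, so it coincides with $M$ and is unique.

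The main obstacle — really the only step that needs care — is the degree bookkeeping through $F$: one must notice that $d^+(v_i)=1$ in $L$ translates into $\deg_M(x_i)=1$ (not into any statement about $y_i$), and dually for $d^-$. Once this is set up cleanly, the decomposition into $|L|$ two-vertex trees and the uniqueness of the perfect matching require no further argument.
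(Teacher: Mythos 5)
Your proposal is correct and follows essentially the same route as the paper: both arguments reduce to observing that $d^+(v_i)=d^-(v_i)=1$ on the cycle forces every row of $C^+$ and of $-C^-$ (equivalently, every vertex $x_i$ and $y_i$) to carry exactly one incident edge, so $M$ is a $1$-regular spanning subgraph of $G$, hence a disjoint union of $|L|$ single-edge trees that is its own unique perfect matching. The only cosmetic difference is that the paper reads these degrees off an explicitly permuted cyclic incidence matrix, whereas you do the bookkeeping column by column.
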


\begin{proof}
Let the $\Gamma$ digraph $D(V,A)$ has a sub digraph $D^\prime(V,L)$
which exists one Hamiltonian cycle and $|L|=n$, the incidence matrix
$C$ of $L$ could be permutation as follows.
  \begin{equation}
          C = \left(
        {\begin{array}{llllll}
            1&0&0&\ldots&0&-1\\
            -1&1&0&\ldots&0&0\\
            0&-1&1&\ldots&0&0\\
            0&0&-1&\ldots&0&0\\
            0&0&0&\ldots&0&0\\
            0&0&0&\ldots&-1&1
        \end{array} } \right ).
\end{equation}
Let $$ F=\left (
    {\begin{array}{c c}
    C^+  \\
    -C^-
    \end{array}}\right)
$$

It is obvious that each row of $F$ has only one $1$ entry and each
column of $F$ has two $1$ entries.

According to theorem~\ref{bipariteofgamma}, $F$ represents a
balanced bipartite graph $G(X,Y;E)$ that each vertex has one edge
connected, and each edge $e_i$ connect on vertex $x_i \in X$ ,
another in $Y$, in another words, $ \exists e_i \in E \, x_j \in X
$,$ \forall x_k \in X \setminus \{x_j\}$, $e_i \bowtie x_j \neq
\emptyset \rightarrow e_i \bowtie x_k =\emptyset
$($e_i,Y$,respectively). According the matching definition, $M$ is a
matching, since $|E|=|L|$, $E$ is a perfect matching. and pair of
vertices between $X$ and $Y$ only has one edge, so $M$ is a forest,
and each tree has only two node with one edge.

\end{proof}

\section{Proof of Theorem~\ref{perfectofgamma} }

\begin{proof}
$\Rightarrow$ Let the $\Gamma$ digraph $D(V,A)$ has a sub digraph
$D^\prime(V,L)$ which is a Hamiltonian cycle and $|L|=n$, let matrix
$C^\prime$ represents the incidence matrix of $D^\prime$, so
$r(C^\prime)=n-1$; According to lemma~\ref{projector1}, the
projector graph $F(D^\prime)$ has a perfect matching, thus $F(D)$
also has a perfect matching.

$\Leftarrow$ Let $G(X,Y;E)$ be a projector graph of the $\Gamma$
graph $D(V,A)$,$M$ is a perfect matching in $G$.  Let
$D^\prime(V,L)$ be a sub graph of $D(V,A)$ and $L=\{a_i |a_i \in D
\wedge e_i \in M\}$.
 Since $r(L)=n-1$,  $D^\prime(V,L)$ is a
strong connected digraph. it deduces that $\forall v_i \in D^\prime
$,$d^+(v_i) \geq 1 \wedge d^-(v_i) \geq 1$. Suppose $\exists v_i \in
D^\prime$, $d^+(v_i) > 1$ ($d^-(v_i) > 1$ respectively), Since
$|M|=n$, it deduces that $\sum_{i=1}^{n}d(v_i)>2n+1$, which imply
that $|L|>n$. this is contradiction with $L=\{a_i |a_i \in D \wedge
e_i \in M\}$ and $|M|=n$. So $\forall v_i \in D^\prime$,
$d^+(v_i)=d^-(v_i)=1$, According the lemma~\ref{HCPdef}, $D^\prime$
has a Hamiltonian cycle.
\end{proof}

\section{
Number of perfect matching in projector graph }

Let us considering the number of perfect matching in $G$ . Firstly,
let us considering a example as shown in figure 1.

\setlength{\unitlength}{1cm}
\begin{picture}(10,4)

\label{org_figure}

\linethickness{0.065mm}
 \put(1, 0){Figure 1. Original Digraph $D$}
%
 \put(1, 1){\vector(0, 1){.8}}
 \put(1, 1){\circle{.5}}
 \put(1.2, 1.4){$a_8$}

 \put(1, 2){\vector(1, 0){.8}}
 \put(1, 2){\circle{.5}}
 \put(1.4, 2.1){$a_{1}$}

 \put(2, 1){\vector(-1,0){.8}}
 \put(2, 1){\circle{.5}}
 \put(1.4, .7){$a_{22}$}

 \put(2, 2){\vector(0, -1){.8}}
 \put(2, 2){\vector(1, 0){.8}}
 \put(2, 2){\circle{.5}}
 \put(2.4, 2.1){$a_2$}
 \put(2.2, 1.4){$a_9$}

 \put(3, 1){\vector(-1,0){.8}}
 \put(3, 1){\vector(0,1){.8}}
 \put(3, 1){\circle{.5}}
 \put(2.4, .7){$a_{21}$}
 \put(3.1, 1.4){$a_{10}$}

 \put(3, 2){\circle{.5}}
 \put(3, 2){\vector(1, 0){.8}}
 \put(3.4, 2.1){$a_{3}$}

 \put(4, 1){\vector(-1,0){.8}}
 \put(4, 1){\circle{.5}}
 \put(3.4, .7){$a_{20}$}

 \put(4, 2){\vector(0, -1){.8}}
 \put(4, 2){\vector(1, 0){.8}}
 \put(4, 2){\circle{.5}}
 \put(4.4, 2.1){$a_4$}
 \put(4.1, 1.4){$a_{11}$}

 \put(5, 1){\vector(-1,0){.8}}
 \put(5, 1){\vector(0,1){.8}}
 \put(5, 1){\circle{.5}}
 \put(4.4, .7){$a_{12}$}
 \put(5.1, 1.4){$a_{19}$}

 \put(5, 2){\circle{.5}}
 \put(5, 2){\vector(1, 0){.8}}
 \put(5.4, 2.1){$a_{5}$}

 \put(6, 1){\vector(-1,0){.8}}
 \put(6, 1){\circle{.5}}
 \put(5.4, .7){$a_{18}$}

 \put(6, 2){\vector(0, -1){.8}}
 \put(6, 2){\vector(1, 0){.8}}
 \put(6, 2){\circle{.5}}
 \put(6.4, 2.1){$a_6$}
 \put(6.1, 1.4){$a_{13}$}

 \put(7, 1){\vector(-1,0){.8}}
 \put(7, 1){\vector(0,1){.8}}
 \put(7, 1){\circle{.5}}
 \put(6.4, .7){$a_{17}$}
 \put(7.2, 1.4){$a_{14}$}

 \put(7, 2){\circle{.5}}
 \put(7, 2){\vector(1, 0){.8}}
 \put(7.4, 2.1){$a_{7}$}

 \put(8, 1){\vector(-1,0){.8}}
 \put(8, 1){\circle{.5}}
 \put(7.4, .7){$a_{16}$}

 \put(8, 2){\circle{.5}}
 \put(8, 2){\vector(0,-1){.8}}
 \put(8.4, 1.4){$a_{15}$}


 \end{picture}

Then the projector graph is shown in figure 2.

\setlength{\unitlength}{1cm}
\begin{picture}(12,4)

\label{new_figure}

\linethickness{0.065mm}
 \put(1, 0){Figure 2. Projector graph $G$}
%
 \put(1, 1){\line(0, 1){1.8}}
 \put(1, 1){\circle{.5}}
 \put(1, 3){\circle{.5}}
 \put(1.1, 1.4){$e_1$}

 \put(1.5, 1){\line(0, 1){1.8}}
 \put(1.5, 1){\circle{.5}}
 \put(1.5, 3){\circle{.5}}
 \put(1.6, 1.4){$e_8$}

 \put(2, 1){\line(0, 1){1.8}}
 \put(2, 1){\circle{.5}}
 \put(2, 3){\circle{.5}}
 \put(2.1, 1.4){$e_{22}$}

 \put(3, 1){\line(0, 1){1.8}}
 \put(3, 1){\line(1, 2){1}}
 \put(3, 1){\circle{.5}}
 \put(3, 3){\circle{.5}}
 \put(2.5, 2.0){$e_{9}$}
 \put(3.2, 2.6){$e_{2}$}

 \put(3.3, .4){$G_1$}

 \put(4, 1){\line(0, 1){1.8}}
 \put(4, 1){\line(-1, 2){1}}
 \put(4, 1){\circle{.5}}
 \put(4, 3){\circle{.5}}
 \put(4.1, 1.4){$e_{10}$}
 \put(3.6, 2.1){$e_{21}$}

 \put(4.6, 1){\line(0, 1){1.8}}
 \put(4.6, 1){\circle{.5}}
 \put(4.6, 3){\circle{.5}}
 \put(4.7, 1.4){$e_3$}

 \put(5.3, 1){\line(0, 1){1.8}}
 \put(5.3, 1){\circle{.5}}
 \put(5.3, 3){\circle{.5}}
 \put(5.4, 1.4){$e_{20}$}

 \put(6, 1){\line(0, 1){1.8}}
 \put(6, 1){\line(1, 2){1}}
 \put(6, 1){\circle{.5}}
 \put(6, 3){\circle{.5}}
 \put(5.8, 2.0){$e_{11}$}
 \put(6.2, 2.6){$e_{4}$}

 \put(6.3, .4){$G_2$}

 \put(7, 1){\line(0, 1){1.8}}
 \put(7, 1){\line(-1, 2){1}}
 \put(7, 1){\circle{.5}}
 \put(7, 3){\circle{.5}}
 \put(7.1, 1.4){$e_{19}$}
 \put(6.6, 2.1){$e_{12}$}

 \put(7.7, 1){\line(0, 1){1.8}}
 \put(7.7, 1){\circle{.5}}
 \put(7.7, 3){\circle{.5}}
 \put(7.7, 1.4){$e_5$}

 \put(8.3, 1){\line(0, 1){1.8}}
 \put(8.3, 1){\circle{.5}}
 \put(8.3, 3){\circle{.5}}
 \put(8.3, 1.4){$e_{18}$}

 \put(9, 1){\line(0, 1){1.8}}
 \put(9, 1){\line(1, 2){1}}
 \put(9, 1){\circle{.5}}
 \put(9, 3){\circle{.5}}
 \put(8.6, 2.0){$e_{6}$}
 \put(9.2, 2.6){$e_{13}$}

 \put(9.3, .4){$G_3$}

 \put(10, 1){\line(0, 1){1.8}}
 \put(10, 1){\line(-1, 2){1}}
 \put(10, 1){\circle{.5}}
 \put(10, 3){\circle{.5}}
 \put(9.6, 2.1){$e_{17}$}
 \put(10.3, 1.4){$e_{14}$}
 \put(10.6, 2.0){$\ldots$}

 \end{picture}

Given a perfect matching $M$, each component(cycle) in $G$ has two
partition edges belong to $M$. Let us code component $G_i$ which
$|G_i|>2$ and matching $M$ to a binary variable.
\begin{equation}
G_{i} = \left\{\begin{array}{ll}
                 1, & \mbox{if $G_i \cap M =\{e_j,e_k,\ldots \} $;} \\
                 0, & \mbox{if $G_i \cap M =\{e_l,e_q,\ldots \} $.}
                \end{array} \right.
\end{equation}

Now there are two cases for the number of perfect matching.
\begin{enumerate}
\item [Label edge.] In that cases, the $Code(M_1)=\{0,0,1\}$ is different
with $Code(M_2)=\{0,1,0\}$. If there are $k$ number of
components(cycles), then there are $2^k$ perfect matching.
\item [Unlabel edge.] In that cases, the $Code(M_1)=\{0,0,1\}$ is
isomorphic to $Code(M_2)=\{0,1,0\}$. The same principle that
$Code(M_3)=\{0,1,1\}$ is isomorphic to $Code(M_4)=\{1,1,0\}$ but is
not isomorphic to $Code(M_1)$.
\end{enumerate}

Then let us summary the maximal number of perfect matching in these
two cases.

\begin{lemma}
\label{number_of_matching} The maximal number of labeled  perfect
matching in a projector graph $G$ is $2^{\frac{n}{4}}$, but the
maximal number of unlabeled perfect matching in a projector graph
$G$ is $\frac{n}{2}$.
\end{lemma}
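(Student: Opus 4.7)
The plan is to count the perfect matchings of $G$ by decomposing $G$ into its connected components. By Theorem~\ref{bipariteofgamma}, every vertex of $G$ has degree $1$ or $2$, so each connected component of $G$ is either a path or an even cycle. A perfect matching must cover every vertex, so each path component admits a unique matching of its edges, while each even-cycle component of length $2l$ admits exactly two disjoint perfect matchings by Lemma~\ref{simplecycle}. Hence the total number of perfect matchings of $G$ equals $2^{k}$, where $k$ is the number of cycle components; my first step is to record this factorisation explicitly.

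Next I would maximise $k$ by combining Theorem~\ref{bipariteofgamma}~c3 with a straightforward vertex count. The projector graph has $2n$ vertices in total, and a cycle component of length $2l$ consumes $2l$ of them, so the number of cycle components is largest when every cycle component has the minimum possible length, namely $4$. Theorem~\ref{bipariteofgamma}~c3 caps the number of length-$4$ components at $n/4$, giving $k \le n/4$ and hence the labelled bound $2^{n/4}$. This bound is attained when $G$ is the disjoint union of exactly $n/4$ four-cycles, so the maximum is achieved.

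For the unlabelled count, the equivalence introduced in the discussion preceding the lemma identifies two matchings whose binary codes agree up to a permutation of the component indices. Under this relation each equivalence class is determined solely by how many cycle components are "flipped", so with $k$ cycle components there are exactly $k+1$ classes. Substituting $k \le n/4$ gives $k+1 \le n/4 + 1 \le n/2$ for $n \ge 4$, which matches the stated bound.

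The main obstacle I expect is pinning down what the paper means by isomorphism of matchings, since the code-string description in the discussion before the lemma is informal. I would therefore open the proof with a one-line formal definition (permutation-invariance of the binary code indexed by the cycle components) and verify that this is well-defined when the components may have different lengths; in that setting the equivalence is only finer, so the bound $n/2$ still holds with room to spare. Once this is settled, the rest of the argument is a routine decomposition-and-counting exercise that relies entirely on Theorem~\ref{bipariteofgamma}~c3 and Lemma~\ref{simplecycle}.
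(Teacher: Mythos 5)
Your decomposition is the same one the paper uses (the components of $G$ are paths or even cycles, each cycle component contributes a factor of $2$ by Lemma~\ref{simplecycle}, and the number of cycle components is capped via Theorem~\ref{bipariteofgamma}), but the pivotal step $k\le n/4$ does not follow from what you cite. Condition c3 of Theorem~\ref{bipariteofgamma} bounds only the number of components of length exactly $4$; it says nothing about cycle components of length $6,8,\ldots$, and each of those also doubles the number of labeled perfect matchings. Since $G$ has $2n$ vertices, vertex counting alone would permit up to $n/3$ six-cycles and hence $2^{n/3}>2^{n/4}$ labeled matchings, so ``the number of cycle components is largest when every cycle has length $4$'' is exactly the point that needs proof and is not supplied. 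What would close the gap is the observation that in a $\Gamma$ digraph every vertex has $d^+=1$ or $d^-=1$, so at least $n$ of the $2n$ vertices of $G$ have degree $1$ and must lie on path components; only then do the cycle components occupy at most $n$ vertices and number at most $n/4$. Neither you nor the paper makes this argument (the paper's own proof has the same hole). Relatedly, your extremal example ``$G$ is the disjoint union of exactly $n/4$ four-cycles'' covers only $n$ of the $2n$ vertices of $G$, so it is not actually a projector graph of the required form.

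The unlabeled count also does not land on the stated value. Counting equivalence classes by the Hamming weight of the code gives $k+1\le n/4+1$ classes, which is strictly less than $n/2$ for $n>4$; the lemma asserts the maximum \emph{is} $n/2$, and the paper reaches that number by the different (and unexplained) computation $2\cdot\frac{n}{4}$. So even granting your class count, you have proved an upper bound of $n/4+1$, not the claimed maximum of $n/2$ --- indeed your argument shows the stated maximum is never attained. You should either decide that the lemma is only an upper bound (and say so, noting the slack) or identify which of the two counts is intended; as written the proposal silently substitutes a weaker, different statement for the one to be proved.
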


\begin{proof}
According to the theorem~\ref{bipariteofgamma}, there at most
$\frac{n}{4}$ components with a components which is length of $k=4$.
When $k$=2, there are only one perfect matching in $G$; When $k=4$,
there are $\frac{n}{4}$ components which is $C_4$, and so on when
$k=6$, there are $\frac{n}{6}$ components which is $C_6$, etc, so
on. According to the lemma~\ref{simplecycle}, each simple cycle has
divided the perfect matching into two class. So maximal number
perfect matching in the non isomorphism cycle which is
$2^{\frac{n}{4}}$. Since in unlabeled cases, every $C_4$ cycle is
isomorphism, the maximal number of perfect matching is
$2*\frac{n}{4}=\frac{n}{2}$.
\end{proof}

Review the example 1 again, it is easy find that follow proposition.
\begin{proposition}
\label{myprop}Given two perfect matching $M1$ and $M2$ in projector
graph $G$, if
 $code(M1)=code(M2)$, then the $r(F^{-1}(M1))=r(F^{-1}(M2))$.
\end{proposition}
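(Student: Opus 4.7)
The plan is to reduce the proposition to Theorem~\ref{rank_theorem}, which gives $r(C) = n - k$ where $k$ is the number of components of the underlying graph of $C$. Hence it suffices to prove that $F^{-1}(M_1)$ and $F^{-1}(M_2)$ have the same number of connected components in $D$ whenever $\text{code}(M_1) = \text{code}(M_2)$; from this the equality of ranks is immediate.

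First I would establish a structural fact about lifting: for any perfect matching $M$ of the projector graph $G$, the preimage $F^{-1}(M)$ is a spanning sub-digraph of $D$ in which every vertex satisfies $d^+(v) = d^-(v) = 1$. This follows from the construction of $F$, because a perfect matching saturates each $x_i \in X$ (i.e.\ each out-copy $v_i^+$) and each $y_i \in Y$ (each in-copy $v_i^-$) exactly once. Consequently $F^{-1}(M)$ decomposes canonically into a disjoint union of directed cycles in $D$, and the number of these cycles is precisely $n - r(F^{-1}(M))$.

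Next I would analyse the contribution of a single component of $G$ to the cycle decomposition. By Theorem~\ref{bipariteofgamma} each non-trivial component $G_i$ of $G$ is an even bipartite cycle of some length $2\ell$, which admits exactly two perfect matchings indexed by the single code bit defined for $G_i$. Tracing each of these two matchings through $F^{-1}$ shows that the number of directed cycles contributed to $F^{-1}(M)$ by $G_i$ depends only on $\ell$ and on that code bit, not on the particular embedding of $G_i$ within $G$. Summing over all components of $G$, the total number of directed cycles in $F^{-1}(M)$ is a function purely of the multiset of (component-length, code-bit) pairs determined by $M$.

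Finally, the hypothesis $\text{code}(M_1) = \text{code}(M_2)$ forces this multiset to coincide for the two matchings (literally in the labelled setting, and up to permutation of same-length components in the unlabelled setting treated in Lemma~\ref{number_of_matching}), so $F^{-1}(M_1)$ and $F^{-1}(M_2)$ share the same number of components $k$, and Theorem~\ref{rank_theorem} yields $r(F^{-1}(M_1)) = n - k = r(F^{-1}(M_2))$. The main obstacle, I expect, will be the second step: for each feasible length $2\ell$ one must verify that the two code choices on a single cycle lift to the same number of directed cycles in $D$ regardless of how the arcs of $D$ are attached to the vertices $v_i^+, v_i^-$, which amounts to showing that the bipartite cycle structure determines the local digraph structure up to arc-isomorphism under the $\Gamma$-digraph constraints.
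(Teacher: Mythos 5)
The paper offers no proof of this proposition at all --- it is asserted with the remark that it is ``easy to find'' by reviewing the example --- so there is no argument of the paper's to compare yours against. Your attempt is therefore more than the paper provides, but it contains a genuine gap, and you have correctly located it yourself: it is the second step. You claim that ``the number of directed cycles contributed to $F^{-1}(M)$ by $G_i$ depends only on $\ell$ and on the code bit.'' This presupposes that the directed cycles of $F^{-1}(M)$ can be partitioned among the components of $G$, and that is false. A component $G_i$ of the projector graph is a set of edges of $G$, hence a set of arcs of $D$, but a single directed cycle of $F^{-1}(M)$ is formed by chaining arcs through the vertices of $D$, and the two copies $x_j\in X$ and $y_j\in Y$ of a vertex $v_j$ generally lie in \emph{different} components of $G$. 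So the cycle structure of $F^{-1}(M)$ is a global property of how all the chosen arcs interlock; it does not decompose as a sum of per-component contributions, and flipping the matching on one $4$-cycle of $G$ versus another can merge or split the directed cycles of $F^{-1}(M)$ in different ways. This is precisely the phenomenon that makes the Hamiltonian cycle problem nontrivial, and no local argument of the kind you sketch can dispose of it.

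There is a second difficulty you should confront before investing more effort: what does $code(M_1)=code(M_2)$ mean? Every component of $G$ is a path or an even cycle (by Theorem~\ref{bipariteofgamma}); a path component has a unique perfect matching and a cycle component has exactly two, indexed by the code bit. Hence if the codes are equal \emph{as labelled vectors}, then $M_1=M_2$ and the proposition is true but vacuous --- it cannot support the enumeration argument in Theorem~\ref{complexitygamma}, which needs to discard matchings whose codes agree only up to permutation of components. If instead the codes are compared as multisets (the ``unlabelled'' reading that Theorem~\ref{complexitygamma} actually requires), the statement is substantive and, for the reason in the previous paragraph, there is no evident reason it should hold: two matchings that flip the same \emph{number} of components, but different components, need not yield subdigraphs with the same number of directed cycles, hence need not have the same rank. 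Your proof does not close this, and I do not believe it can be closed, because the statement in its load-bearing form appears to be false.
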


\subsection{Proof of Theorem~\ref{complexitygamma} } Now let us proof
the theorem~\ref{complexitygamma}.
\begin{proof}
Let $G$ be a project balanced bipartition of $D$. According
theorem~\ref{bipariteofgamma}, the $\Gamma$ graph is equivalent to
find a perfect match $M$ in a project $G$.

According to the lemma~\ref{number_of_matching}, the maximal number
non isomorphism perfect matching in $G$ is only $n$.

Thus it is only need exactly enumerate all of non isomorphism
perfect matching $M$, then obtain the $value=r(F^{-1}(M))$,if
$value=n-1$, then the $e_i \in M$ is also $e_i \in C$, where $C
\subset D$ is a Hamiltonian cycle.

Since the complexity of rank of matrix is $O(n^3)$, finding a simple
cycle in a component with degree $2$ is $O(n^2)$, and obtaining a
perfect matching of a bipartite graph is $O((m+n)\sqrt{n})<O(n^2)$
\cite{Hopcroft1973}. Then all exactly algorithms need to calculate
the $n$ time $o(n^3)$. Thus the complexity is $O(n^4)$.
\end{proof}

Since the non isomorphism perfect matching comes from the coding of
edges in the component of $G$, it is not easy implementation.

Let us give two recursive equation to obtain a
 perfect matching $M$ from $G$. Suppose there are $k$ component $G_1,G_2,\ldots G_k$ in
  $G$ where $G_i$ is a component with degree $2$ and $|E_i| \geq 3$.

\begin{equation}
\label{M0_equation}
  M^\prime= \left\{\begin{array}{ll}
                 M(t) \otimes G_t, & \mbox{ $G_t$ is a cycle }; \\
                 M(t),      & \mbox{otherwise.}
                \end{array} \right.
\end{equation}
\begin{equation}
\label{M1_equation}
  M(t+1)= \left\{\begin{array}{ll}
                 M^\prime, & \mbox{if  $r(F^{-1}(M^\prime)) > r(F^{-1}(M(t)))$
                  }; \\
                 M(t),      & \mbox{otherwise.}
                \end{array} \right.
\end{equation}
where $t \leq k-1$, when $t=0$, $M(0)$ is the initial perfect
matching from $G$.

 When $r(F^{-1}(M(t)))=n-1$, According the
theorem~\ref{bipariteofgamma}, the  $A=F^{-1}(M(t))$ is a
Hamiltonian cycle solution. If all of $r(F^{-1}(M(t)))<n-1$, then
there has no Hamiltonian cycle in $D$.

Since the non isomorphism perfect matching $M$ in $G$ is poset, the
function $r(F^{-1}(M))$ in $G$ is monotonic, so this approach is
exactly approach.

Let us give a example to illustrate the approach in detail.

\begin{example}
Considering the digraph $D$ in figure 1, then the projector graph
$G$ in figure 2.

Let
$M(0)=\{e_1,e_8,e_{22},e_9,e_{10},e_3,e_{20},e_{11},e_{19},e_5,e_{18},e_6,
e_{17},e_7,e_{15},e_{16} \} $.

Thus the $r(F^{-1}(M(0))=n-3$. Let $M^\prime=r(F^{-1}(M(0) \otimes
G_3)$,then $r(F^{-1}(M^\prime)=n-4$, thus $M(1)=M(0)$ and then turn
to $G_2$,$G_1$. At last it obtain the solution.
\end{example}

Considering the equation~\ref{M1_equation}, let it substituted by
following equations when $r(M^\prime)=n-1$ and $t < k-1$.

\begin{equation}
\label{M2_equation} M(t+1)=M^\prime \mbox{   if $r(F^{-1}(M^\prime))
\geq r(F^{-1}(M(t)))$ }
\end{equation}

It is obvious that all non-isomorphism Hamiltonian cycle could
obtain by the  repeat check the equation~\ref{M2_equation} and the
equation $r(M^\prime)=n-1$.

In conversely, if a Hamiltonian cycle of $\Gamma$ digraphs is given,
it represents a perfect matching $M$ in its projector graph $G$.
Thus the equation~\ref{M2_equation} and
Theorem~\ref{complexitygamma} follows a corollary.

\begin{corollary}
\label{complexityanotherHCP} Given a Hamiltonian $\Gamma$ digraph,
the complexity of determining another non-isomorphism Hamiltonian
cycle is polynomial time.
\end{corollary}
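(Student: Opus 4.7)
The plan is to exhibit a polynomial-time procedure that, starting from a given Hamiltonian cycle $H$ of the $\Gamma$ digraph $D$, produces a different non-isomorphic Hamiltonian cycle whenever one exists. First, I would map $H$ to its corresponding perfect matching $M(0) = F(H)$ in the projector graph $G = F(D)$; by Lemma~\ref{projector1} this matching is well defined and has exactly one edge per column of $F$. Next, I would enumerate the cycle components $G_1, G_2, \ldots, G_k$ of $G$ of length at least $4$; by Theorem~\ref{bipariteofgamma}(c3) we have $k \leq n/4$, and these components can be identified by a single linear scan of $G$.

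The heart of the procedure is to iterate over these components and, for each $G_t$, toggle the matching inside $G_t$ by forming the symmetric difference $M' = M(t) \otimes G_t$, exactly as in Equation~\ref{M0_equation}. By Lemma~\ref{simplecycle}, every even cycle in $G$ decomposes into two disjoint perfect matchings, so $M'$ is again a perfect matching of $G$, differing from $M(t)$ only on the edges of $G_t$. Following Equation~\ref{M2_equation}, I would keep $M'$ whenever $r(F^{-1}(M')) \geq r(F^{-1}(M(t)))$ and, in particular, report it as a solution whenever $r(F^{-1}(M')) = n-1$; by Theorem~\ref{perfectofgamma} such an $M'$ corresponds to a Hamiltonian cycle of $D$, and since the binary code of $M'$ differs from that of $M(0)$ in the coordinate indexing $G_t$, Proposition~\ref{myprop} together with Lemma~\ref{number_of_matching} guarantees that the recovered cycle is non-isomorphic to $H$.

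The running time bound then follows from the ingredients already used in Theorem~\ref{complexitygamma}: each toggle $M \otimes G_t$ costs $O(n^2)$, while the rank verification $r(F^{-1}(M'))$ costs $O(n^3)$. Since there are at most $k \leq n/4$ components to probe, the whole search runs in $O(n \cdot n^3) = O(n^4)$, establishing the claimed polynomial bound.

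The main obstacle I anticipate is ruling out the possibility that toggling a single component yields a spanning collection of shorter disjoint cycles inside $D$ rather than a single Hamiltonian cycle. This is handled exactly by the monotonicity of $r(F^{-1}(M))$ highlighted just before Proposition~\ref{myprop}: if $r(F^{-1}(M')) = n-1$ then, by Theorem~\ref{rank_theorem}, $F^{-1}(M')$ has a single connected component, and Lemma~\ref{HCPdef} combined with Theorem~\ref{perfectofgamma} forces in-degree and out-degree exactly one at every vertex, so $F^{-1}(M')$ is a genuine Hamiltonian cycle distinct from $H$. Hence the procedure either certifies that no second non-isomorphic Hamiltonian cycle exists (after at most $n/4$ failed trials) or explicitly produces one in time $O(n^4)$.
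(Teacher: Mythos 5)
Your proposal follows essentially the same route as the paper: the paper derives this corollary in one line from Equation~\ref{M2_equation} together with Theorem~\ref{complexitygamma}, i.e.\ map the given Hamiltonian cycle to a perfect matching $M(0)$ of the projector graph, greedily toggle the cycle components via $M(t)\otimes G_t$, and accept whenever the rank test $r(F^{-1}(M'))=n-1$ succeeds, exactly the procedure you spell out. You in fact supply more detail than the paper does (the explicit $O(n^4)$ accounting and the appeal to Proposition~\ref{myprop} for non-isomorphism), but the underlying argument, including its reliance on the monotonicity claim preceding Proposition~\ref{myprop}, is the same.
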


\subsection{ The HCP in digraph with bound two}

Let us extend the Theorem~\ref{complexitygamma}  to  digraphs with
$d^+(v) \leq 2$ and $d^-(v) \leq 2$ in this section.

\begin{theorem}
\label{complexityofHCP} The complexity of finding a Hamiltonian
cycle existing or not in digraphs with degree $d^+(v)\leq 2$ and
$d^-(v)\leq 2$ is polynomial time.
\end{theorem}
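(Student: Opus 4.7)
The plan is to reduce an arbitrary digraph $D$ with $d^+(v)\leq 2$ and $d^-(v)\leq 2$ to a $\Gamma$ digraph $D'$ by a polynomial-time transformation that preserves Hamiltonicity, and then invoke Theorem~\ref{complexitygamma}. The only vertex types in $D$ that violate the $\Gamma$ condition (indegree/outdegree in $\{1,2\}$ with degree sum exactly $3$) are the \emph{heavy} vertices with $(d^+,d^-)=(2,2)$ and the \emph{light} vertices with $(d^+,d^-)=(1,1)$; vertices of type $(1,2)$ or $(2,1)$ are already $\Gamma$-compatible, and any vertex with $d^+=0$ or $d^-=0$ immediately rules out strong connectivity and hence Hamiltonicity.

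First I would handle each heavy vertex $v$ by the standard \emph{in/out split}: replace $v$ with two new vertices $v^-,v^+$, route the two incoming arcs of $v$ into $v^-$, route the two outgoing arcs of $v$ out of $v^+$, and insert a single new arc $\langle v^-,v^+\rangle$. Then $v^-$ has type $(1,2)$ and $v^+$ has type $(2,1)$, so both are $\Gamma$-compatible. Since $v^+$ is reachable only through $v^-$ and vice versa, any Hamiltonian cycle of the new digraph must traverse the forced arc $\langle v^-,v^+\rangle$, and contracting this arc yields a Hamiltonian cycle of $D$; conversely, every Hamiltonian cycle of $D$ lifts to a Hamiltonian cycle of the split digraph. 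This step increases $|V|$ by at most $|V|$ and $|A|$ by at most $|V|$.

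Next I would eliminate light vertices. If $v$ has unique incoming arc $\langle u,v\rangle$ and unique outgoing arc $\langle v,w\rangle$, both arcs must lie in every Hamiltonian cycle, so I contract $v$: delete $v$ together with its two arcs and add the arc $\langle u,w\rangle$ when it does not already exist. Repeated contraction terminates in $O(n)$ steps and its correctness follows from Lemma~\ref{HCPdef} applied to the forced segment. Three degenerate situations must be checked in polynomial time during each contraction: a self-loop $u=w$ (only admissible if the contraction leaves a single vertex, otherwise no Hamiltonian cycle exists), a parallel arc already present (which forces a short closed walk of length $\leq 3$, again refutable directly), and loss of strong connectivity (detectable by a linear-time search). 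After every contraction the indegrees and outdegrees of the remaining vertices are unchanged except possibly at $u$ and $w$, so the $(d^+,d^-)\leq (2,2)$ invariant is maintained and the heavy-vertex split can be re-applied if needed; iterating the two reductions terminates in $O(n)$ rounds and yields a $\Gamma$ digraph $D'$ with $|V(D')|, |A(D')|=O(n)$.

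Finally, invoking Theorem~\ref{complexitygamma} on $D'$ decides the existence of a Hamiltonian cycle in $D'$, hence in $D$, in time $O(n^4)$, and an explicit cycle in $D$ is reconstructed by un-splitting and un-contracting. The main obstacle will be the bookkeeping of the light-vertex contraction: making sure that the three degenerate outcomes above are correctly recognised as certificates of non-Hamiltonicity (rather than artefacts of the reduction) and that the overall transformation preserves both simplicity and strong connectivity, so that $D'$ truly lies in the class to which Theorem~\ref{complexitygamma} applies.
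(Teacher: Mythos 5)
Your core step is exactly the paper's: split every vertex with $d^+(v)=d^-(v)=2$ into an in-copy $v^-$ and an out-copy $v^+$ joined by a forced arc, observe that this at most doubles $n$ and preserves Hamiltonicity in both directions, and then invoke Theorem~\ref{complexitygamma} on the resulting $\Gamma$ digraph. Where you go beyond the paper is the elimination of $(1,1)$-vertices by contraction; the paper's proof silently treats such vertices as already $\Gamma$-compatible and performs no contraction at all, so your version is the more careful one under the reading of the $\Gamma$ definition in which every vertex must have type $(1,2)$ or $(2,1)$. One concrete point to fix in that extra step: when contracting a light vertex $v$ on the path $u\rightarrow v\rightarrow w$ and the arc $\langle u,w\rangle$ is already present, this is \emph{not} a certificate of non-Hamiltonicity. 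Any Hamiltonian cycle is forced through $u\rightarrow v\rightarrow w$ and therefore cannot use the chord $\langle u,w\rangle$, so the correct action is to delete that chord and then contract; declaring the instance non-Hamiltonian there would give wrong answers (e.g.\ a directed $n$-cycle plus one chord skipping a degree-$(1,1)$ vertex is Hamiltonian). With that repair your reduction is sound, and the overall argument stands or falls with Theorem~\ref{complexitygamma}, which both you and the paper take as given.
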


\begin{proof}

 Suppose a digraph $D(V,A)$ having a vertex $v_i$ is shown as figure $3$, which is $d^(v_i)
=2  \wedge d^-(v_i) =2 $

\setlength{\unitlength}{1cm}

\begin{picture}(8,4)

\label{figure2}

\linethickness{0.065mm}

 \put(1, 0){Figure 3. A vertex with degree than 2}
%
 \put(1, 1){\vector(2, 1){1.8}}


 \put(1, 3){\vector(2,-1){1.8}}

 \put(3, 2){\circle{.5}}

 \put(1.2, 1.4){$a1$}
 \put(1.2, 3.1){$a2$}

 \put(3.2, 2){\vector(2, 1){1.8}}
 \put(3.2, 2){\vector(2,-1){1.8}}

 \put(4.2, 1.0){$a3$}

 \put(4.2, 2.8){$a4$}


 \end{picture}

 Let us spilt this vertex to  two vertices that one of vertex has degree with in degree 2 or out degree 1 ,
another vertex has degree with in degree 1 or out degree 2 as shown
in figure $4$. Then the $D$ is derived to a new $\Gamma$ graph $S$.

\begin{picture}( 10,4)
  \label{figure2}
\linethickness{0.065mm}

\put(1, 0){Figrue 4 A vertex in $D$ is mapping to a vertex in
$\Gamma$ digraph}
\thicklines

\put(1, 1){\vector(2, 1){1.8}}

\put(1, 2){\vector(1, 0){1.8}}

\put(3, 2){\circle{.3}}

\put(1.2, 1.4){$a1$} \put(1.2, 2.1){$a2$}

\put(3.2, 2){\vector(1,0){1.8}}

\put(5.6, 1.4){$a3$}

\put(5,2){\circle{.3}} \put(6.2, 2.8){$a4$}
\put(5.2,2){\vector(2,1){1.8}}
\put(5.2, 2){\vector(1,0){1.8}}

 %
\end{picture}

It is obvious that each vertex in the $\Gamma$ graph $S$ has
increase $1$ vertices and $1$ arcs of $D$. Suppose the worst cases
is each vertex in $D$ has in degree 2 and out degree 2, the total
vertices in $S$ has $2n$ vertices.

According to the theorem~\ref{complexitygamma},  obtain a Hailtonian
cycle $L^\prime$ in $S$ is no more then $O(n^4)$, then the $D$ will
has a Hamiltonian cycle $L^\prime=L \cap A$.

\end{proof}

\section{
Discussion P versus NP}

The $P$ versus $NP$ is a famous open problem in computer science and
mathematics, which means to determine whether very language accepted
by some nondeterministic algorithm in polynomial time is also
accepted by some deterministic algorithm in polynomial time
\cite{cook2000}. Cook give a proposition for the  $P$ versus $NP$.

\begin{proposition}
\label{PeqNP} If L is NP-complete and $L \in P$, then $P=NP$.
\end{proposition}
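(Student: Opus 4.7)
The plan is to unpack the definitions of NP-completeness and membership in P, and then argue the two set-theoretic inclusions $P \subseteq NP$ and $NP \subseteq P$ separately. The first inclusion is immediate: any deterministic polynomial-time decider is, in particular, a nondeterministic one that simply makes no guesses, so $P \subseteq NP$ requires no work. All the substance of the proposition lies in the reverse inclusion, which is where I would spend the argument.

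For $NP \subseteq P$, I would fix an arbitrary language $L' \in NP$ and exhibit a deterministic polynomial-time algorithm for it. Since $L$ is NP-complete, there is, by definition, a polynomial-time computable reduction $f$ such that $x \in L' \Leftrightarrow f(x) \in L$. Since $L \in P$ by hypothesis, there is a deterministic polynomial-time decider $A_L$ for $L$. The composite algorithm ``compute $f(x)$, then run $A_L$ on $f(x)$, then return the answer'' decides $L'$, so it remains only to bound its running time.

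Next I would verify polynomiality. If $f$ runs in time $q(|x|)$ for some polynomial $q$, then in particular $|f(x)| \leq q(|x|)$, because an algorithm cannot write more symbols than it executes steps. If $A_L$ runs in time $p(|y|)$ on input $y$, then $A_L(f(x))$ takes at most $p(q(|x|))$ steps, and the total time is bounded by $q(|x|) + p(q(|x|))$. Since the composition of polynomials is a polynomial, this is polynomial in $|x|$, so $L' \in P$. As $L'$ was arbitrary, $NP \subseteq P$, and combined with the trivial inclusion this yields $P = NP$.

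The main obstacle is essentially bookkeeping rather than a deep step: one needs to be explicit that polynomial-time reducibility \emph{composes}, i.e., that the class $P$ is closed under polynomial-time many-one reductions. The only subtle point is that the length of the intermediate string $f(x)$ is polynomially bounded, which is forced automatically by $f$ being a polynomial-time procedure. Given that observation, the rest is routine definition-chasing.
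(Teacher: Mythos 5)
Your proof is correct and is the standard argument: compose the polynomial-time many-one reduction guaranteed by NP-completeness with the polynomial-time decider for $L$, noting that polynomials compose and that $|f(x)|$ is bounded by the reduction's running time. The paper itself gives no proof of this proposition --- it simply cites it from Cook's survey --- so there is nothing to compare against; your argument is exactly the canonical one that the citation stands in for.
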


According above proposition and the result above section, $P$ versus
$NP$ problem has a answer.

\begin{theorem}
\label{PeqNP_theorem} $P=NP$
\end{theorem}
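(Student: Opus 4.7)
The plan is to chain together three ingredients that are already on the table: Plesn\'ik's NP-completeness theorem from~\cite{PLESNIK1978}, Theorem~\ref{complexityofHCP}, and Cook's Proposition~\ref{PeqNP}. No new combinatorics is required; the proof is a single syllogism, and the substantive work was carried out in the preceding sections.

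First I would introduce the NP-complete language supplied by Plesn\'ik. Let $L$ be the decision problem \emph{HCP restricted to planar digraphs in which every vertex has indegree in $\{1,2\}$ and outdegree in $\{1,2\}$}. By~\cite{PLESNIK1978}, $L$ is NP-complete. This is the NP-complete hook needed in order to apply Proposition~\ref{PeqNP}.

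Next I would verify the containment that makes Theorem~\ref{complexityofHCP} applicable. Every instance of $L$ is, by definition, a digraph $D$ satisfying $d^+(v)\le 2$ and $d^-(v)\le 2$ for all $v\in V(D)$, so it belongs to the class treated by Theorem~\ref{complexityofHCP}. The polynomial-time (in fact $O(n^4)$) algorithm of that theorem decides HCP on such digraphs, and therefore decides $L$. Consequently $L\in P$.

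Finally, plugging the pair $(L,P)$ into Proposition~\ref{PeqNP} yields $P=NP$. The only step that requires any care is the second one, namely checking that the degree restriction in Plesn\'ik's result is a genuine special case of the degree restriction in Theorem~\ref{complexityofHCP}; this is an immediate inspection of the two statements, since $\{1,2\}\subseteq\{0,1,2\}$ and planarity is not used by the algorithm. Everything else is mechanical.
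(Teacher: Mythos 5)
Your proposal is essentially identical to the paper's own proof: both invoke Plesn\'ik's NP-completeness result, apply Theorem~\ref{complexityofHCP} to place the problem in $P$, and conclude via Proposition~\ref{PeqNP}. The only difference is that you explicitly check that Plesn\'ik's degree-restricted class is contained in the class handled by Theorem~\ref{complexityofHCP}, a detail the paper leaves implicit; of course, like the paper's proof, your argument stands or falls entirely with the correctness of Theorem~\ref{complexityofHCP}.
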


\begin{proof}
As the result of \cite{PLESNIK1978}, the complexity of HCP in
digraph with bound two is $NP-complete$. According the
theorem~\ref{complexityofHCP}, the complexity of HCP in digraph with
bound two is also $P$, thus according to proposition~\ref{PeqNP},
$P=NP$.

\end{proof}

In fact, the \cite{PLESNIK1978} proves that $ 3SAT \preceq _{p} HCP
\; of \; \Gamma \; digraph$, since $3SAT$ is a $NPC$ problem,  which
also implies that $P=NP$.

\section{Conclusion}
According to the theorem~\ref{complexityofHCP}, the complexity of
determining a Hamiltonian cycle existence or not in digraph with
bound degree two is in polynomial time. And according to the
theorem~\ref{PeqNP_theorem}, $P$ versus $NP$ problem has closed,
$P=NP$.

\section*{Acknowledgements}
The author would like to thank Prof. Kaoru Hirota for valuable
suggestions, thank Prof. J{\o}rgen Bang-Jensen who called mine
attention to the paper \cite{PLESNIK1978}, and thank Andrea Moro for
useful discussions.

\thebibliography{6}
\itemsep=0pt
\bibitem{Johnson1985}
Papadimitriou, C. H. {\it Computational complexity }, in Lawler, E.
L., J. K. Lenstra, A. H. G. Rinnooy Kan, and D. B. Shmoys, eds., The
Traveling Salesman Problem: A Guided Tour of Combinatorial
Optimization. Wiley, Chichester, UK. (1985), 37--85

\bibitem{PLESNIK1978}
J.Plesn{\'\i}k,{\it The NP-Completeness of the Hamiltonian Cycle
Problem in Planar digraphs with degree bound two}, Journal
Information Processing Letters, Vol.8(1978), 199--201


\bibitem{Hopcroft1973}
J.E. Hopcroft and R.M. Karp , {\it An $n^{5/2} $ Algorithm for
Maximum Matchings in Bipartite Graphs }. SIAM J. Comput. Vol.2,
(1973), 225--231

\bibitem{Hall1935}
P. Hall, { \it On representative of subsets}, J. London Math. Soc.
10, (1935), 26--30

\bibitem{Pearl1973}
Pearl, M, { \it Matrix Theory and Finite Mathematics},McGraw-Hill,
New York,(1973), 332--404.

\bibitem{cook2000}
Stephen Cook. {\it The {P} Versus {NP} Problem
},"http://citeseer.ist.psu.edu/302888.html" ,2000.

\end{document}